\theoremstyle{plain}
\numberwithin{equation}{section}
\newtheorem{thm}{Theorem}[section]
\newtheorem{lem}[thm]{Lemma}
\newenvironment{exam}
{\begin{flushleft}\textbf{Example}.\enspace}%
{\end{flushleft}}
\newcommand{\complex}{{\mathbb C}}
\newcommand{\Natural}{{\mathbb N}}
\newcommand{\real}{{\mathbb R}}
\newcommand{\tbullet}{\raise .4ex\hbox{\tiny$\bullet$}} 
\newcommand{\ahat}{\widehat{A}}
\newcommand{\bhat}{\widehat{B}}
\newcommand{\ehat}{\widehat{E}}
\newcommand{\phat}{\widehat{P}}
\newcommand{\fhat}{\widehat{f}}
\newcommand{\phihat}{\widehat{\phi}}
\newcommand{\rmtr}{\mathrm{tr\,}}
\newcommand{\ityes}{\textit{yes}}
\newcommand{\itno}{\textit{no}}
\newcommand{\escript}{\mathcal{E}}
\newcommand{\lscript}{\mathcal{L}}
\newcommand{\oscript}{\mathcal{O}}
\newcommand{\pscript}{\mathcal{P}}
\newcommand{\sscript}{\mathcal{S}}
\newcommand{\ab}[1]{\left|#1\right|}
\newcommand{\doubleab}[1]{\left|\left|#1\right|\right|}
\newcommand{\brac}[1]{\left\{#1\right\}}
\newcommand{\paren}[1]{\left(#1\right)}
\newcommand{\sqbrac}[1]{\left[#1\right]}
\newcommand{\elbows}[1]{{\left\langle#1\right\rangle}}
\newcommand{\ket}[1]{{\left|#1\right>}}
\newcommand{\bra}[1]{{\left<#1\right|}}
\begin{document}

\title{CONDITIONED OBSERVABLES\\ IN QUANTUM MECHANICS}
\author{Stan Gudder\\ Department of Mathematics\\
University of Denver\\ Denver, Colorado 80208\\
sgudder@du.edu}
\date{}
\maketitle

\begin{abstract}
This paper presents some of the basic properties of conditioned observables in finite-dimensional quantum mechanics. We begin by defining the sequential product of quantum effects and use this to define the sequential product of two observables. The sequential product is then employed to construct the conditioned observable relative to another observable. We then show that conditioning preserves mixtures and post-process of observables. We consider conditioning among three observables and a complement of an observable. Corresponding to an observable, we define an observable operator in a natural way and show that this mapping also preserves mixtures and post-processing. Finally, we present a method of defining conditioning in terms of self-adjoint operators instead of observables. Although this technique is related to our previous method it is not equivalent.
\end{abstract}

\section{Introduction}  
Various studies in quantum mechanics are based on the results of a measurement conditioned on the value of a previous measurement. For example, one might want to know the position of a particle when the particle is in a given energy state. Although the conditioning of observables seems to be a useful concept there does not appear to be any systematic investigations concerning it. This article does not develop any deep or penetrating results. Instead, it presents an introduction to a theory of conditioned observables. Also, we restrict attention to finite-dimensional quantum mechanics. Although this is a strong restriction, it includes the framework of quantum computation and information theory \cite{hz12,nc00}. These are important topics that have attracted great attention in the recent literature.

We begin with the study of quantum effects.These correspond to simple experiments with only two values or outcomes. These values are usually denoted by \ityes-\itno (or 1\,-\,0). A general effect may be imprecise or fuzzy while a precise effect is called sharp. If $a$ and
$b$ are effects we define their sequential product $a\circ b$ which is the effect that describes the experiment in which $a$ is measured first and then $b$ is measured second. Because of quantum interference, $a$ can interfere with the measurement of $b$, while $b$ cannot interfere with the measurement of $a$. We also call $a\circ b$, the effect $b$ conditioned on the effect $a$ and write $(b\mid a)=a\circ b$. Upon introducing the concept of a state we can also define a corresponding conditional probability.

Now a general observable $A$ may have many possible outcomes $x_1,\ldots ,x_n$. If $a_x$ is the effect that occurs when $A$ has outcome $x$ we can think of $A$ as a set of effects $A=\!\brac{a_x\colon x=x_i, i=\!1,2,\ldots ,n}$. If
$B=\!\brac{b_y\colon y=y_j, j=\!1,2,\ldots ,m}$ is another observable, we shall show in Section~3 how to combine the effects $(b_y\mid a_x)$ to form an observable $(B\mid A)$ that describes $B$ conditioned on $A$. We show that $(B\mid A)$ has a simple form when $A$ and $B$ are sharp observables. We also consider multiple conditionings
$\paren{(B\mid A)\mid C}$ and $\paren{B\mid (A\mid C)}$.

There are two important ways of combining observables called mixtures and post-processing \cite{fghl18,ogwa17}. Section~3 shows that conditioning preserves both of these combination methods. Corresponding to an observable $A$, we define a self-adjoint operator $\ahat$ called the observable operator. The operator $\ahat$ describes $A$ in various ways and we show that
${}^\wedge$ preserves mixtures and post-processing in Section~4. Section~5 discusses a complement of an observable.

Finally, Section~6 considers conditioning from a different point of view. Instead of describing a measurable quantity by an observable, we can describe it by a certain self-adjoint operator. Although this viewpoint is related to our previous work, it is not equivalent to it.

\section{Quantum Effects}  
Let $\lscript (H)$ be the set of linear operators on a finite-dimensional complex Hilbert space $H$. We also denote the set of self-adjoint operators on $H$ by $\lscript _S(H)$ and the zero and unit operators by $0, I$ respectively. For
$S,T\in\lscript (H)$ we write $S\le T$ if $\elbows{\phi ,S\phi}\le\elbows{\phi ,T\phi}$ for all $\phi\in H$. We define the set of \textit{effects} by
\begin{equation*}
\escript (H)=\brac{a\in\lscript (H)\colon 0\le a\le I}
\end{equation*}

An effect $a$ is said to \textit{occur} when a \ityes-\itno\ experiment for $a$ has the value \ityes \cite{dp00,fb94,hz12}. It is well-known that $\escript (H)\subseteq\lscript _S(H)$ \cite{fb94,hz12}. For $a\in\escript (H)$, we call $a'=I-a\in\escript (H)$ the \textit{complement} of $a$ and view $a'$ as the effect that occurs when the previous \ityes-\itno\  experiment has the value \itno. Clearly, $0,I\in\escript (H)$ and $0$ corresponds to the experiment that never occurs (is always \itno) and $I$ responds to the experiment that always occurs (is always \ityes). We denote the set of projections on $H$ by
$\pscript (H)$. It is clear that $\pscript (H)\subseteq\escript (H)$ and we call elements of $\pscript (H)$
\textit{sharp effects} \cite{gud98}. A one-dimensional projection $P_\phi =\ket{\phi}\bra{\phi}$ where $\doubleab{\phi}=1$ is \textit{atomic}. If $\phi\in H$, $\phi\ne 0$ we write $\phihat =\phi\big/\doubleab{\phi}$. We then have
\begin{equation*}
P_{\phihat}=\tfrac{1}{\doubleab{\phi}^2}\,\ket{\phi}\bra{\phi}
\end{equation*}

An effect $\rho\in\escript (H)$ is a \textit{partial state} if the trace $\rmtr (\rho )\le 1$ and $\rho$ is a \textit{state} if
$\rmtr (\rho )=1$. We denote the set of states by $\sscript (H)$. If $\rho\in\sscript (H)$, $a\in\escript (H)$ we call
$E_\rho (a)=\rmtr (\rho a)$ the \textit{probability that} $a$ \textit{occurs} in the state $\rho$. Of course,
$0\le E_\rho (a)\le 1$. If $P_\phi$ is atomic, then $P_\phi\in\sscript (H)$ and we call $P_\phi$ (and $\phi$) a
\textit{pure state}. We then write
\begin{equation*}
E_\phi (a)=E_{P_\phi}(a)=\rmtr (P_\phi a)=\elbows{\phi ,a\phi}
\end{equation*}
If $\phi$ and $\psi$ are pure states, we call $\ab{\elbows{\phi ,\psi}}^2$ the \textit{transition probability} from $\phi$ to
$\psi$.

We denote the unique positive square root of $a\in\escript (H)$ by $a^{1/2}$. For $a,b\in\escript (H)$, their \textit{sequential product} is the effect $a\circ b=a^{1/2}ba^{1/2}$ \cite{gg02,gl08,gn01}. We interpret $a\circ b$ as the effect that results from first measuring $a$ and that $a\circ b=b\circ a$ if and only if $ab=ba$ where $ab$ is the usual operator product \cite{gn01}. This is interpreted as saying that $a$ and $b$ do not interfere if and only if $a$ and $b$ commute. We also call $a\circ b$ the effect $b$ \textit{conditioned on the effect} $a$ and write $(b\mid a)=a\circ b$. For short, we sometimes call $(b\mid a)$ the effect $b$ \textit{given} $a$. We have that $(a\mid a)=a^2$ and $a$ is sharp if and only if $(a\mid a)=a$.

Notice that if $b_1,b_2,b_1+b_2\in\escript (H)$, then $(b_1+b_2\mid a)=(b_1\mid a)+(b_2\mid a)$. In particular,
$\escript (H)$ is convex and if $\lambda _i\ge 0$ with $\sum\lambda _i=1$, then
\begin{equation*}
\paren{\sum\lambda _ib_i\mid a}=\sum\lambda _i(b_i,a)
\end{equation*}
so $b\mapsto (b\mid a)$ is a convex function. Of course, $a\mapsto (b\mid a)$ is not convex in general. Also, for every
$\lambda\in\sqbrac{0,1}\subseteq\real$ we have
\begin{equation*}
(b\mid\lambda a)=(\lambda b\mid a)=\lambda (b\mid a)
\end{equation*}
Moreover,
\begin{equation*}
\rmtr\sqbrac{(b\mid a)}=\rmtr (ba)=\rmtr (ab)=\rmtr\sqbrac{(a\mid b)}
\end{equation*}
If $\rho\in\sscript (H)$, $a\in\escript (H)$, since $\rho\circ a\le\rho$ we have that
\begin{equation*}
\rmtr\sqbrac{(\rho\mid a)}=\rmtr (a\circ\rho )=\rmtr (\rho\circ a)\le\rmtr (\rho )\le 1
\end{equation*}
Hence, $(\rho\mid a)$ is a partial state. For $b\in\escript (H)$ we obtain
\begin{align*}
E_\rho\sqbrac{(b\mid a)}&=\rmtr\sqbrac{\rho (b\mid a)}=\rmtr\sqbrac{\rho\,a\circ b}=\rmtr\sqbrac{(a\circ\rho )b}\\
  &=\rmtr\sqbrac{(\rho\mid a)b}
\end{align*}
We interpret $\rmtr\sqbrac{(\rho\mid a)b}$ as the probability that $b$ occurs for the partial state $(\rho\mid a)$. If
$E_\rho (a)=\rmtr (\rho a)\ne 0$ we can form the state $(\rho\mid a)/\rmtr (\rho a)$. Then as a function of $b$
\begin{equation}                
\label{eq21}
\ehat _\rho\sqbrac{(b\mid a)}=\frac{E_\rho\sqbrac{(b\mid a)}}{E_\rho (a)}
\end{equation}
becomes a probability measure on $\escript (H)$ and we call \eqref{eq21} the \textit{conditional probability of} $b$
\textit{given} $a$.

We now examine some specific examples of $(b\mid a)$. The simplest case is when $a=P_\phi$ is atomic. We then obtain
\begin{equation*}
(b\mid P_\phi )=P_\phi\circ b=\ket{\phi}\bra{\phi}\,b\,\ket{\phi}\bra{\phi}=\elbows{\phi ,b\phi}P_\phi
\end{equation*}
Hence, $(b\mid P_\phi )$ is $P_\phi$ attenuated by the probability of $b$ in the state $\phi$. If $b=P_\phi$ is atomic and $a^{1/2}\phi\ne 0$ we have that
\begin{align*}
(P_\phi\mid a)=a\circ P_\phi&=a^{1/2}\ket{\phi}\bra{\phi}a^{1/2}=\ket{a^{1/2}\phi}\bra{a^{1/2}\phi}\\
  &=\doubleab{a^{1/2}\phi}^2P_{(a^{1/2}\phi )^\wedge}=\elbows{\phi ,a\phi}P_{(a^{1/2}\phi )^\wedge}
\end{align*}
If $a=P_\phi ,b=P_\psi$ are both atomic, we obtain
\begin{equation*}
(P_\psi\mid P_\phi )=P_\phi\circ P_\psi =\elbows{\phi ,P_\psi\phi}P_\phi =\ab{\elbows{\phi ,\psi}}^2P_\phi
\end{equation*}
where $\ab{\elbows{\phi ,\psi}}^2$ is the transition probability from $\phi$ to $\psi$.

More generally, let $P\in\pscript (H)$ so $P$ is a sharp effect. We can then write $P=\sum P_{\phi _i}$ where
$\phi _i$ are mutually orthogonal. We then have
\begin{align*}
(P\mid a)&=a\circ P=\sum a\circ P_{\phi _i}=\sum\elbows{\phi_i,a\phi _i}P_{(a^{1/2}\phi _i)^\wedge}
\intertext{and}
(b\mid P)&=P\circ b=PbP=\sum _{i,j}P_{\phi _i}bP_{\phi _j}
     =\sum _{i,j}\ket{\phi _i}\bra{\phi _i}\,b\,\ket{\phi _j}\bra{\phi _j}\\
     &=\sum _{i,j}\elbows{\phi _i,b\phi _j}\ket{\phi _i}\bra{\phi _j}
\end{align*}

\section{Observables}  
For a finite set $\Omega _A$, an \textit{observable with value-space} $\Omega _A$ is a subset
$A=\brac{a_x\colon x\in\Omega _A}$ of $\escript (H)$ such that $\sum _{x\in\Omega _A}a_x=I$. We write $a_x$ as the effect that occurs when $A$ has the value $x$. The condition $\sum a_x=I$ ensures that $A$ has one of the values
$x\in\Omega _A$. Observables are also called finite \textit{positive operator-valued measures} \cite{hz12,nc00}. If an observable $A$ has only one value, then $A=\brac{I}$ so $A$ is called \textit{trivial}. If $A$ has two values, say \ityes\ and \itno\ then $A=\brac{a,a'}$ where $a\in\escript (H)$ and $a$ is the effect that $A$ has value \ityes, while $a'$ is the effect that $A$ has value \itno .

If $a_x\in\pscript (H)$ for all $x\in\Omega _A$, we call $A$ a \textit{sharp} observable. In this case we have for all
$y\in\Omega _a$ that
\begin{equation*}
a_y+a_y\circ\sum _{x\ne y}a_x=a_y\circ\sum _{x\in\Omega _A}A_x=a_y
\end{equation*}
Hence, $\sum_{x\ne y}a_y\circ a_x=0$ which implies that $a_y\circ a_x=0$ whenever $x\ne y$. We conclude that
$a_ya_x=a_xa_y=0$ whenever $x\ne y$. Hence, the effects for a sharp observable commute and are mutually
orthogonal which makes them much simpler than unsharp observables. If the effects $a_x$, $x\in\Omega _A$, are atoms, we say that the observable $A$ is \textit{atomic}. In this case, $a_x=P_{\phi _x}$ where
$\brac{\phi _x\colon x\in\Omega _A}$ is an orthonormal basis for $H$. In general, if $\Omega _A\subseteq\real$ and
$\rho\in\sscript (H)$, we define the \textit{expectation of} $A$ \textit{in the state} $\rho$ by 
\begin{equation*}
E_\rho (A)=\sum xE_\rho (a_x)=\sum x\ \rmtr (\rho a_x)=\rmtr (\rho\sum xa_x)
\end{equation*}
Notice that $\ahat =\sum xa_x$ is a self-adjoint operator that we call the \textit{observable operator} for $A$. This operator has the same expectations as $A$ for every state $\rho\in\sscript (H)$.

Let $A,B$ be observables with $A=\brac{a_x\colon x\in\Omega _a}$ and $B=\brac{b_y\colon y\in\Omega _B}$. We define their \textit{sequential product} $A\circ B$ to have value-space $\Omega _A\times\Omega _B$ and
\begin{equation*}
A\circ B=\brac{a_x\circ b_y\colon (x,y)\in\Omega _A\times\Omega _B}
\end{equation*}
To show that $A\circ B$ is indeed an observable, we have that
\begin{equation*}
\sum _{(x,y)}a_x\circ b_y=\sum _xa_x\circ\paren{\sum _yb_y}=\sum _xa_x\circ I=\sum _xa_x=I
\end{equation*}
The \textit{left-marginal} of $A\circ B$ consists of the effects
\begin{equation*}
\sum _ya_x\circ b_y=a_x\circ\sum _yb_y=a_x\circ I=a_x
\end{equation*}
so the left-marginal of $A\circ B$ is just $A$. In a similar way the \textit{right-marginal} of $A\circ B$ consists of the effects
\begin{equation*}
\sum _xa_x\circ b_y=\sum _xa_x^{1/2}b_ya_x^{1/2}
\end{equation*}
As before, the right-marginal of $A\circ B$ is an observable but it need not equal $B$ and we denote it by $(B\mid A)$. We thus have that $\Omega _{(B\mid A)}=\Omega _B$ and
\begin{equation*}
(B\mid A)=\brac{\sum _xa_x\circ b_y\colon y\in\Omega _B}=\brac{\sum _x(b_y\mid a_x)\colon y\in\Omega _B}
\end{equation*}
We call $(B\mid A)$ the \textit{observable} $B$ \textit{conditioned on the observable} $A$. For short, we call $(B\mid A)$ the observable $B$ \textit{given} $A$. We denote the effects in $(B\mid A)$ by
\begin{equation*}
(B\mid A)_y=\sum _xa_x\circ b_y=\sum _x(b_y\mid a_x)
\end{equation*}
We use $\oscript (H)$ for the set of observables on $H$.

If $\rho\in\sscript (H)$ and $A\in\oscript (H)$ we define the state $\rho$ \textit{conditioned on} $A$ by
\begin{equation*}
(\rho\mid A)=\sum _xa_x^{1/2}\rho a_x^{1/2}
\end{equation*}
Note that $(\rho\mid A)\in\sscript (H)$ because
\begin{align*}
\rmtr (\rho\mid A)&=\sum _x\rmtr (a_x^{1/2}\rho a_x^{1/2})=\sum _x\rmtr (a_x\rho )\\
  &=\rmtr\paren{\sum _xa_x\rho}=\rmtr (\rho )=1 
\end{align*}

The next result gives a duality between states and observables.

\begin{lem}    
\label{lem31}
If $A,B\in\oscript (H)$ with $\Omega _B\subseteq\real$ and $\rho\in\sscript (H)$, then
\begin{equation*}
E_\rho (B\mid A)=E_{(\rho\mid A)}(B)
\end{equation*}
\end{lem}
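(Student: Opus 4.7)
The plan is to expand both sides using the definitions of conditioned observable, conditioned state, and expectation, and then match them using the cyclic property of trace.

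First, I would write the left-hand side by unfolding the definition of expectation applied to the observable $(B\mid A)$, whose effects are $(B\mid A)_y=\sum_x a_x^{1/2}b_y a_x^{1/2}$. This gives
\begin{equation*}
E_\rho(B\mid A)=\sum_y y\,\rmtr\!\bigl[\rho (B\mid A)_y\bigr]=\sum_y y\sum_x \rmtr\!\bigl[\rho\, a_x^{1/2}b_y a_x^{1/2}\bigr].
\end{equation*}
Next, I would expand the right-hand side using the definition of $(\rho\mid A)=\sum_x a_x^{1/2}\rho a_x^{1/2}$, obtaining
\begin{equation*}
E_{(\rho\mid A)}(B)=\sum_y y\,\rmtr\!\bigl[(\rho\mid A)\,b_y\bigr]=\sum_y y\sum_x \rmtr\!\bigl[a_x^{1/2}\rho a_x^{1/2}\,b_y\bigr].
\end{equation*}

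The final step is to compare the two double sums term by term. By the cyclic property of the trace,
\begin{equation*}
\rmtr\!\bigl[\rho\, a_x^{1/2}b_y a_x^{1/2}\bigr]=\rmtr\!\bigl[a_x^{1/2}\rho\, a_x^{1/2}b_y\bigr],
\end{equation*}
so the $(x,y)$ summands agree, and the two expectations coincide. Since $\Omega_B$ is finite, interchanging the order of summation is automatic, so there are no convergence issues.

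The argument is essentially a one-line application of cyclicity of trace; I do not anticipate any real obstacle. The only thing to be careful about is ensuring that the sums in the definitions of $(B\mid A)_y$ and $(\rho\mid A)$ are correctly pulled out of the trace and paired with the appropriate factor, which is straightforward in the finite-dimensional setting since everything is a finite sum of bounded operators.
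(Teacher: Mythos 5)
Your proof is correct and follows essentially the same route as the paper's: unfold the definitions of $(B\mid A)_y$ and $(\rho\mid A)$, and shift $a_x^{1/2}\,(\cdot)\,a_x^{1/2}$ from $b_y$ to $\rho$ via the cyclic property of the trace. No gaps.
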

\begin{proof}
We have that
\begin{align*}
E_\rho (B\mid A)&=\sum _yy\ \rmtr\sqbrac{\rho (B\mid A)_y}=\sum _yy\ \rmtr (\rho\sum _xa_x\circ b_y)\\
  &=\sum _{x,y}y\ \rmtr\sqbrac{\rho (a_x\circ b_y)}=\sum _{x,y}y\ \rmtr (\rho a_x^{1/2}b_ya_x^{1/2})\\
  &=\sum _{x,y}y\ \rmtr (a_x^{1/2}\rho a_x^{1/2}b_y)\\
  &=\sum _y\rmtr\sqbrac{\paren{\sum _xa_x^{1/2}\rho a_x^{1/2}}b_y}=\sum _yy\ \rmtr\sqbrac{(\rho\mid A)b_y}\\
  &=E_{(\rho\mid A)}(B)\qedhere
\end{align*}
\end{proof}

We now consider $A\circ B$ and $(B\mid A)$ for special cases $A,B\in\oscript (H)$. If $A$ and $B$ are atomic with $a_x=P_{\phi _x}$, $b_y=P_{\psi _y}$, $x\in\Omega _A$, $y\in\Omega _B$ we have that
\begin{equation*}
A\circ B=\brac{\ab{\elbows{\phi _x,\psi _y}}^2P_{\phi _x}\colon x\in\Omega _A,y\in\Omega _B}
\end{equation*}
It follows that
\begin{equation*}
(B\mid A)_y=\sum _x\ab{\elbows{\phi _x,\psi _y}}^2P_{\phi _x}
\end{equation*}
If $A$ is atomic and $B\in\oscript (H)$ is arbitrary, we have
\begin{align*}
A\circ B&=\brac{\elbows{\phi _x,b_y\phi _x}P_{\phi _x}\colon x\in\Omega _A,y\in\Omega _B}\\
\intertext{and}
(B\mid A)_y&=\sum _x\elbows{\phi _x,b_y\phi _x}P_{\phi _x}
\end{align*}
If $A\in\oscript (H)$ is arbitrary and $B$ is atomic, we have
\begin{align*}
A\circ B&=\brac{\elbows{\psi _y,a_x\psi _y}P_{(a_x^{1/2}\psi _y)^\wedge}\colon x\in\Omega _A,y\in\Omega _B}\\
\intertext{and}
(B\mid A)_y&=\sum _x\elbows{\psi _y,a_x\psi _y}P_{(a_x^{1/2}\psi _y)^\wedge}
\end{align*}

Let $B^{(i)}\in\oscript (H)$, $i=1,2,\ldots ,n$, with the same value-space $\Omega$ where
\begin{equation*}
B^{(i)}=\brac{b_y^{(i)}\colon y\in\Omega}
\end{equation*}
For $\lambda _i\in\sqbrac{0,1}$, $i=1,2,\ldots ,n$, with $\sum\lambda _i=1$ we define the \textit{mixture}
$\sum\lambda _iB^{(i)}\in\oscript (H)$ by
\begin{equation*}
\sum _{i=1}^n\lambda _iB^{(i)}=\brac{\sum _{i=1}^n\lambda _ib_y^{(i)}\colon y\in\Omega}
\end{equation*}
It is easy to check that $\sum\lambda _iB^{(i)}$ is indeed on observable. Mixtures are an important way of combining observables and have been well-studied \cite{fghl18,ogwa17}. It is convenient to use the notation
$\paren{\sum\limits _{i=1}^n\lambda _iB^{(i)}}_y=\sum\limits _{i=1}^n\lambda _ib_y^{(i)}$.
\medskip

Let $\Omega _A$, $\Omega _B$ be value-spaces and let $\nu =\sqbrac{\nu _{xy}}$, $x\in\Omega _A$, $y\in\Omega _B$ be a matrix. We call $\nu$ a \textit{stochastic} matrix if $\nu _{xy}\in\sqbrac{0,1}\subseteq\real$ and
$\sum _{y\in\Omega _B}\nu _{xy}=1$ for all $x\in\Omega _A$. The matrix $\nu$ is called a \textit{classical channel} and
$\nu _{xy}$ gives the probability of a transition from $x$ to $y$ \cite{fghl18,ogwa17}. The condition
$\sum _{y\in\Omega _B}\nu _{xy}=1$ means that $x$ makes a transition to some $y\in\Omega _B$ with probability one. Now let $A\in\oscript (H)$ with $A=\brac{a_x\colon x\in\Omega _A}$ and let $\nu$ be a classical channel from $\Omega _A$ to
$\Omega _B$. Define $B\in\oscript (H)$ by $B=\brac{b_y\colon y\in\Omega _B}$ where
$b_y=\sum _{x\in\Omega _A}\nu _{xy}a_x$. Now the value-space of $B$ is $\Omega _B$ and $B$ is indeed an observable because
\begin{equation*}
\sum _yb_y=\sum _y\sum _x\nu _{xy}a_x=\sum _x\sum _y\nu _{xy}a_x=\sum _xa_x=I
\end{equation*}
We use the notation $B=\nu\tbullet A$ and call $B$ a \textit{post-processing} of $A$ \cite{fghl18,ogwa17}. The next result shows that conditioning preserves mixtures and post-processing.

\begin{thm}    
\label{thm32}
{\rm{(i)}}\enspace $A\circ\sum\lambda _iB^{(i)}=\sum\lambda _iA\circ B^{(i)}$.
{\rm{(ii)}}\enspace $\paren{\sum\lambda _iB^{(i)}\mid A}=\sum\lambda _i(B^{(i)}\mid A)$.
{\rm{(iii)}}\enspace If $C=\brac{c_z\colon z\in\Omega _C}$ is an observable, then $(\nu\tbullet A\mid C)=\nu\tbullet (A\mid C)$.
\end{thm}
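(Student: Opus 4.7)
The plan is to verify each identity directly at the level of effects, exploiting the single structural fact that for every fixed $a\in\escript (H)$ the map $b\mapsto a\circ b=a^{1/2}ba^{1/2}$ is linear on $\lscript (H)$. Once this linearity is in hand, each part reduces to swapping the order of two finite sums and matching the result to the definition of mixture or post-processing. Before starting, it is worth noting that the value-spaces match on both sides of each identity, so the mixtures and post-processings on the right are well-defined observables: for (i) and (ii), all of $A\circ B^{(i)}$ share value-space $\Omega _A\times\Omega$ and all of $(B^{(i)}\mid A)$ share value-space $\Omega$; for (iii), $\nu\tbullet A$ and $\nu\tbullet (A\mid C)$ both have value-space $\Omega _B$.

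For (i), I would compute the $(x,y)$-component of each side. The left side has $(x,y)$-effect
\begin{equation*}
a_x\circ\paren{\sum _i\lambda _ib_y^{(i)}}=\sum _i\lambda _i\paren{a_x\circ b_y^{(i)}},
\end{equation*}
by linearity in the second slot, and this is exactly the $(x,y)$-effect of $\sum _i\lambda _i(A\circ B^{(i)})$.

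For (ii), I would first take the right-marginal over $\Omega _A$ and then apply the same linearity:
\begin{equation*}
\paren{\sum _i\lambda _iB^{(i)}\mid A}_y=\sum _xa_x\circ\paren{\sum _i\lambda _ib_y^{(i)}}=\sum _i\lambda _i\sum _x\paren{a_x\circ b_y^{(i)}}=\sum _i\lambda _i(B^{(i)}\mid A)_y,
\end{equation*}
where the middle equality is a swap of two finite sums together with the linearity observation.

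For (iii), the $y$-effect of $(\nu\tbullet A\mid C)$ is
\begin{equation*}
\sum _zc_z\circ (\nu\tbullet A)_y=\sum _zc_z\circ\sum _x\nu _{xy}a_x=\sum _x\nu _{xy}\sum _zc_z\circ a_x=\sum _x\nu _{xy}(A\mid C)_x,
\end{equation*}
which is precisely the $y$-effect of $\nu\tbullet (A\mid C)$. The manipulation uses linearity of $c_z\circ(\cdot)$ applied to the finite sum $\sum _x\nu _{xy}a_x$, followed by interchanging the two finite sums over $x$ and $z$. There is no real obstacle in this theorem; the only thing requiring care is the bookkeeping of indices and the check that the observables being mixed or post-processed share a common value-space, so that the resulting objects on both sides are the same observable rather than merely observables with equal effects under some re-indexing.
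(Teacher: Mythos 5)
Your proof is correct and follows essentially the same route as the paper: componentwise verification using linearity of $b\mapsto a\circ b$ in its second argument, with sums interchanged exactly as in the paper's argument for (i) and (iii). The only cosmetic difference is that you write out (ii) explicitly where the paper derives it from (i) by summing over $x$; the extra remark on matching value-spaces is a harmless bonus.
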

\begin{proof}
(i)\enspace For any $x\in\Omega _A$ and $y\in\Omega$ we have that
\begin{align*}
\paren{A\circ\sum\lambda _iB^{(i)}}_{(x,y)}&=a_x\circ\paren{\sum\lambda _iB^{(i)}}_y
  =a_x\circ\sum\lambda _ib_y^{(i)}=\sum\lambda _ia_x\circ b_y^{(i)}\\
  &=\sum\lambda _i(A\circ B^{(i)})_{(x,y)}=\paren{\sum\lambda _iA\circ B^{(i)}}_{(x,y)}
\end{align*}
The result now follows.
(ii)\enspace This follows from (i).
(iii)\enspace For all $y\in\Omega$ we have that
\begin{align*}
(\nu\tbullet A\mid C)_y&=\sum _zc_z\circ (\nu\tbullet A)_y=\sum _zc_z\circ\paren{\sum _x\nu _{xy}a_x}
  =\sum _x\nu _{xy}\sum _zc_z\circ a_x\\
  &=\sum _x\nu _{xy}(A\mid C)_x=\sqbrac{\nu\tbullet (A\mid C)}_y
\end{align*}
The result follows.
\end{proof}

We now briefly discuss multiple conditioning. Letting $A,B,C\in\oscript (H)$ we can form the \textit{biconditional}
$\paren{(B\mid A)\mid C}$ in which $C$ is measured first, $A$ is measured second and $B$ is measured last. By definition, we have that
\begin{align*}
\paren{(B\mid A)\mid C}&=\brac{\sum _zc_z\circ (B\mid A)_y\colon y\in\Omega _B}\\
  &=\brac{\sum _zc_z\circ\paren{\sum _xa_x\circ b_y}\colon y\in\Omega _B}\\
  &=\brac{\sum _{z,x}c_z\circ (a_x\circ b_y)\colon y\in\Omega _B}\\
  &=\brac{\sum _{z,y}c_z^{1/2}a_x^{1/2}b_ya_z^{1/2}c_z^{1/2}\colon y\in\Omega _B})
\end{align*}
In particular, if $A=\brac{P_{\alpha _x}}$, $C=\brac{P_{\beta _z}}$ are atomic, we have that
\begin{equation*}
\paren{(B\mid A)\mid C}
  =\brac{\sum _{z,x}\ab{\elbows{\beta _z,\alpha _x}}^2\elbows{\alpha _x,b_y\alpha _x}P_{\beta _z}\colon y\in\Omega _B}
\end{equation*}
Because of nonassociativity, the biconditional is different than
\begin{align*}
\paren{B\mid (A\mid C)}&=\brac{\sum _x(A\mid C)_x\circ b_y\colon y\in\Omega _B}\\
  &=\brac{\sum _x\paren{\sum _zc_z\circ a_x}\circ b_y\colon y\in\Omega _B}
\end{align*}
which cannot be simplified further even if $A,C$ are atomic.

\section{Observable Operators}  
We now consider the observable operator $\ahat =\sum xa_x$ where $A=\brac{a_x\colon x\in\Omega _A}$ and
$\Omega _A\subseteq\real$. In general $\ahat\in\lscript _S(H)$ is not unique. If $A$ is atomic, then $\ahat$ is unique, the values $x\in\Omega _A$ are the eigenvalues of $A$. and $a_x$ is the projection for the corresponding eigenvector. If $f$ is a real-valued function $f\colon\real\to\real$, we define $\fhat (\ahat\,)=\sum f(x)a_x$. The reason we use the notation $\fhat$ is because $\fhat (\ahat\,)$ is not the usual function of an operator. For example, if $f(x)=x^2$, then
\begin{equation*}
\fhat (\ahat\,)=\sum x^2a_x\ne (\ahat\,)^2=f(\ahat\,)
\end{equation*}
If $A$ happens to be sharp, then we do have $\fhat (\ahat\,)=f(\ahat\,)$. In general, $\ahat$ determines $A$ because for any $a_x\in A$ there exists a polynomial $p_x$ such that $a_x=p_x(\ahat\,)$.

If $\nu$ is a classical channel from $\Omega _A$ to $\Omega _B$, we define the function $f_\nu\colon\Omega _A\to\real$ by $f_\nu (x)=\sum _{y\in\Omega _B}y\nu _{xy}$. If we have another channel $\mu$ from $\Omega _B$ to $\Omega _C$, then the matrix product $\nu\mu$ is a classical channel from $\Omega _A$ to $\Omega _C$. Indeed, we have that
\begin{equation*}
\sum _z(\nu\mu )_{xz}=\sum _z\paren{\sum _y\nu _{xy}\mu _{yz}}=\sum _y\nu _{xy}\sum _z\mu _{yz}=\sum _y\nu _{xy}=1
\end{equation*}
so $\nu\mu$ is stochastic.

\begin{lem}    
\label{lem41}
If $\nu$ and $\mu$ are classical channels as above and $A=\brac{a_x\colon x\in\Omega _A}$ is an observable, then
\begin{equation*}
\mu\tbullet (\nu\tbullet A)=(\nu\mu )\tbullet A
\end{equation*}
\end{lem}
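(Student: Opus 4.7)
The plan is to prove the identity by direct unfolding of the definition of post-processing applied twice and then recognizing the inner sum as the matrix product $\nu\mu$. Since both sides are observables on the value-space $\Omega_C$, it suffices to check that their $z$-th effect agrees for every $z \in \Omega_C$.

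First I would write $\nu \tbullet A = \{b_y : y \in \Omega_B\}$ where, by definition, $b_y = \sum_{x \in \Omega_A} \nu_{xy} a_x$. Applying the definition of post-processing again with the channel $\mu$, the $z$-th effect of $\mu \tbullet (\nu \tbullet A)$ is $\sum_{y \in \Omega_B} \mu_{yz} b_y = \sum_y \mu_{yz} \sum_x \nu_{xy} a_x$. Next I would swap the (finite) sums to obtain $\sum_x \bigl(\sum_y \nu_{xy}\mu_{yz}\bigr) a_x$, and recognize the bracketed factor as the $(x,z)$ entry $(\nu\mu)_{xz}$ of the matrix product. Since $\nu\mu$ was already shown above to be a classical channel from $\Omega_A$ to $\Omega_C$, this last expression is exactly the $z$-th effect of $(\nu\mu) \tbullet A$.

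The argument is genuinely a one-line calculation once the definitions are written out, so there is no real obstacle; the only thing to be careful about is the convention for matrix multiplication (row-stochastic matrices multiplied in the order that makes entries transform correctly under composition), which has already been fixed in the preceding paragraph where $\sum_z(\nu\mu)_{xz} = 1$ is verified. I would therefore present the proof as a short chain of equalities indexed by a generic $z \in \Omega_C$, with a single line noting that the interchange of sums is justified by finiteness of $\Omega_A$ and $\Omega_B$.
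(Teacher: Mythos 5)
Your proposal is correct and follows exactly the paper's argument: unfold the two post-processings at a generic $z\in\Omega _C$, interchange the finite sums, and identify $\sum _y\nu _{xy}\mu _{yz}$ with $(\nu\mu )_{xz}$. No differences worth noting.
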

\begin{proof}
For all $z\in C$ we have that
\begin{align*}
\sqbrac{\mu\tbullet (\nu\tbullet A)}_z&=\sum _y\mu _{yz}(\nu\tbullet A)_y=\sum _y\mu _{yz}\sum _x\nu _{xy}a_x
  =\sum _x\sum _y\nu _{xy}\mu _{yz}a_x\\
  &=\sum _x(\nu\mu )_{xz}a_x=\sqbrac{(\nu\mu )\tbullet A}_z
\end{align*}
The result follows
\end{proof}

The next result shows that ${}^\wedge$ preserves post-processing and mixtures.

\begin{thm}    
\label{thm42}
{\rm{(i)}}\enspace Using the above notation, we have that
\begin{equation*}
(\nu\tbullet A)^\wedge =\fhat _\nu (\ahat\,)
\end{equation*}
{\rm{(ii)}}\enspace $\sqbrac{\mu\tbullet (\nu\tbullet A)}^\wedge =\fhat _{\nu\mu}(\ahat\,).\quad$
{\rm{(iii)}}\enspace If $\sum\lambda _iB^{(i)}$ is a mixture of the observables $B^{(i)}$, then
\begin{equation*}
\sqbrac{\lambda _iB^{(i)}}^\wedge =\sum _i\lambda _i\sqbrac{B^{(i)}}^\wedge
\end{equation*}
\end{thm}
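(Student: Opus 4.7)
The plan is to unwind the definition of the observable operator ${}^\wedge$ in each case and reduce everything to an interchange of finite sums.

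For part (i), I would start by writing out $(\nu\tbullet A)^\wedge = \sum_y y(\nu\tbullet A)_y = \sum_y y\sum_x \nu_{xy}a_x$ directly from the definition of post-processing. Swapping the two finite sums gives $\sum_x\bigl(\sum_y y\nu_{xy}\bigr)a_x = \sum_x f_\nu(x)a_x$, which is exactly $\fhat_\nu(\ahat\,)$ by definition of $\fhat(\ahat\,)$. No analytic subtlety is involved since everything is finite-dimensional.

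Part (ii) is a one-line consequence of what is already on hand: by Lemma~\ref{lem41}, $\mu\tbullet(\nu\tbullet A) = (\nu\mu)\tbullet A$, so applying (i) with the classical channel $\nu\mu$ in place of $\nu$ yields $[\mu\tbullet(\nu\tbullet A)]^\wedge = \fhat_{\nu\mu}(\ahat\,)$. The only thing to check is that $\nu\mu$ is a classical channel from $\Omega_A$ to $\Omega_C$, but this was verified immediately before Lemma~\ref{lem41}.

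For part (iii), I would again unfold the definition: $\bigl(\sum_i\lambda_iB^{(i)}\bigr)^\wedge = \sum_y y\bigl(\sum_i\lambda_iB^{(i)}\bigr)_y = \sum_y y\sum_i\lambda_i b_y^{(i)}$. Interchanging the order of summation gives $\sum_i\lambda_i\sum_y y\,b_y^{(i)} = \sum_i\lambda_i[B^{(i)}]^\wedge$, as required.

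There is no real obstacle here; the three statements are essentially bookkeeping identities about the linear map $A\mapsto\ahat$. The only mild wrinkle is remembering that $\fhat(\ahat\,)$ is not the usual functional calculus, so in (i) one must appeal to the definition $\fhat(\ahat\,)=\sum f(x)a_x$ rather than attempting to interpret $\fhat_\nu(\ahat\,)$ via spectral theory; after that caveat each part is a straightforward Fubini-style rearrangement.
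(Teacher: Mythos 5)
Your proposal is correct and follows essentially the same route as the paper: part (i) by unfolding the definitions and interchanging the finite sums, part (ii) by combining (i) with Lemma~\ref{lem41}, and part (iii) by another interchange of summation. Nothing is missing.
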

\begin{proof}
(i)\enspace Since
\begin{align*}
(\nu\tbullet A)^\wedge&=\sum _yy(\nu\tbullet A)_y=\sum _yy\sum _x\nu _{xy}a_x
  =\sum _x\paren{\sum _yy\nu _{xy}}a_x\\
  &=\sum _xf_\nu (x)a_x=\fhat _\nu (\ahat\,)
\end{align*}
The result follows.\newline
(ii)\enspace The result follows from Part~(i) and Lemma~\ref{lem41}.\newline
(iii)\enspace Since
\begin{equation*}
\sqbrac{\sum\lambda _iB^{(i)}}^\wedge =\sum _xx\sqbrac{\sum _i\lambda _iB^{(i)}}_x
  =\sum _i\lambda _i\sum _xxB_x^{(i)}=\sum\lambda _i\sqbrac{B^{(i)}}^\wedge
\end{equation*}
the result follows.
\end{proof}

If $A=\brac{a_{(x,y)}\colon (x,y)\in\Omega _A}$ is an observable with value-space $\Omega _A\subseteq\real ^2$, we define the \textit{observable operator} of $A$ by $\ahat =\sum _{x,y}xya_{(x,y)}$. If $a\in\escript (H)$ and $T\in\lscript (H)$ we use the notation
\begin{equation*}
(T\mid a)=a^{1/2}Ta^{1/2}=a\circ T
\end{equation*}

\begin{thm}    
\label{thm43}
If $A=\brac{a_x\colon x\in\Omega _A}$ and $B=\brac{b_y\colon y\in\Omega _B}$ are real-valued observables then
{\rm{(i)}}\enspace $(B\mid A)^\wedge =\sum _x(\bhat\mid a_x)$ and
{\rm{(ii)}}\enspace $(A\circ B)^\wedge =\sum _xx(\bhat\mid a_x)=\sum _xx(a_x\circ\bhat\,)$.
\end{thm}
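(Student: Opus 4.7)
The plan is to compute the two observable operators directly from their definitions and push the scalar $y$ (or $xy$) inside the sum of conjugations by $a_x^{1/2}$. All the work is bookkeeping; nothing substantive needs to be invoked beyond linearity of the sequential product in its second argument and the definitions given just above the theorem statement.

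For part (i), I would start by recalling that $\Omega_B \subseteq \real$, so the observable operator of $(B\mid A)$ is given by
\begin{equation*}
(B\mid A)^\wedge = \sum_y y\,(B\mid A)_y = \sum_y y \sum_x a_x\circ b_y.
\end{equation*}
I swap the order of summation and use $a_x\circ b_y = a_x^{1/2} b_y a_x^{1/2}$. Since the scalar $y$ commutes with everything and $a_x^{1/2}$ is independent of $y$, I can pull it outside the inner sum:
\begin{equation*}
\sum_x \sum_y y\, a_x^{1/2} b_y a_x^{1/2} = \sum_x a_x^{1/2}\paren{\sum_y y b_y} a_x^{1/2} = \sum_x a_x^{1/2}\bhat\, a_x^{1/2} = \sum_x (\bhat\mid a_x),
\end{equation*}
which is (i).

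For part (ii), I note that $A\circ B$ has value-space $\Omega_A\times\Omega_B\subseteq\real^2$ and effects $a_{(x,y)}=a_x\circ b_y$, so by the product-valued definition of the observable operator stated just before the theorem,
\begin{equation*}
(A\circ B)^\wedge = \sum_{x,y} xy\, a_x\circ b_y = \sum_x x \sum_y y\, a_x^{1/2} b_y a_x^{1/2} = \sum_x x\, a_x^{1/2}\bhat\, a_x^{1/2},
\end{equation*}
and the last expression equals both $\sum_x x(a_x\circ\bhat)$ and $\sum_x x(\bhat\mid a_x)$ by the conventions set in the paragraph preceding the theorem.

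The only potential pitfall is the subtlety that, in general, $\fhat(\ahat)$ is \emph{not} the usual functional calculus applied to $\ahat$; here, however, we are only rearranging sums and pulling the fixed operators $a_x^{1/2}$ through the scalar $y$, so no appeal to functional calculus is needed and the identities hold for arbitrary (not necessarily sharp) observables $A,B$.
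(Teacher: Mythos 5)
Your proposal is correct and follows essentially the same computation as the paper: both proofs expand the observable operator by definition, interchange the sums, and pull the scalar weights through the conjugation $a_x^{1/2}(\cdot)a_x^{1/2}$ to assemble $\bhat=\sum_y yb_y$, using the extended notation $(T\mid a)=a^{1/2}Ta^{1/2}$ introduced just before the theorem. No gaps.
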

\begin{proof}
(i)\enspace The result follows from
\begin{align*}
(B\mid A)^\wedge&=\sum _yy(B\mid A)_y=\sum _yy\sum _xa_x\circ b_y=\sum _x\paren{a_x\circ\sum _yyb_y}\\
  &=\sum _x(a_x\circ\bhat\,)=\sum _x(\bhat\mid a_x)
\end{align*}
(ii)\enspace The result follows from
\begin{align*}
(A\circ B)^\wedge&=\sum _{x,y}xy(A\circ B)_{(x,y)}=\sum _{x,y}xya_x\circ b_y=\sum _{x,y}xya_x^{1/2}b_ya_x^{1/2}\\
    &=\sum _xxa_x^{1/2}\sum _yyb_ya_x^{1/2}=\sum _xxa_x^{1/2}\bhat a_x^{1/2}=\sum _xx(\bhat\mid a_x)\\
   &=\sum _xx(a_x\circ\bhat\,)\qedhere
\end{align*}
\end{proof}

\begin{exam}  
The simplest example is the qubit Hilbert space $H=\complex^2$ and dichotomic (two-valued) atomic observables 
$A=\brac{P_{\phi _1},P_{\phi _2}}$, $B=\brac{P_{\psi _1},P_{\psi _2}}$ where $\brac{\phi _1,\phi _2}$, $\brac{\psi _1,\psi _2}$ are orthonormal bases for $H$. The sequential product observable becomes
\begin{align*}
A\circ B
  &=\brac{P_{\phi _1}\circ P_{\psi _1}, P_{\phi _1}\circ P_{\psi _2},P_{\phi _2}\circ P_{\psi _1},P_{\phi _2}\circ P_{\psi _2}}\\
  &=\brac{\ab{\elbows{\phi _1,\psi _1}}^2P_{\phi _1},\ab{\elbows{\phi _1,\psi _2}}^2P_{\phi _1},%
  \ab{\elbows{\phi _2,\psi _1}}^2P_{\phi _2},\ab{\elbows{\phi _2,\psi _2}}^2P_{\phi _2}}
\end{align*}
Letting $\Omega _A=\brac{x_1,x_2}$, $\Omega _B=\brac{y_1,y_2}$, $B$ conditioned on $A$ is the observable
\begin{equation*}
(B\mid A)_{y_i}=(A\circ B)_{(x_1,y_i)}+(A\circ B)_{(x_2,y_i)}
  =\ab{\elbows{\phi _1,\psi _i}}^2P_{\phi _1}+\ab{\elbows{\phi _2,\psi _i}}^2P_{\phi _2}
\end{equation*}
for $i=1,2$. If $\Omega _A,\Omega _B\subseteq\real$, the observable operators become
\begin{equation*}
\ahat =x_1P_{\phi _1}+x_2P_{\phi _2},\quad\bhat =y_1P_{\psi _1}+y_2P_{\psi _2}
\end{equation*}
Applying Theorem~\ref{thm43}(i) we obtain
\begin{align*}
(B\mid A)^\wedge&=\elbows{\phi _1,\bhat\phi _1}P_{\phi _1}+\elbows{\phi _2,\bhat\phi _2}P_{\phi _2}\\
  &=\sqbrac{y_1\ab{\elbows{\phi _1,\psi _1}}^2+y_2\ab{\elbows{\phi _1,\psi _2}}^2}P_{\phi _1}\\
  &\qquad +\sqbrac{y_1\ab{\elbows{\phi _2,\psi _1}}^2+y_2\ab{\elbows{\phi _2,\psi _2}}^2}P_{\phi _2}\\
  &=\sqbrac{y_2+(y_1-y_2)\ab{\elbows{\phi _1,\psi _1}}^2}P_{\phi _1}
  +\sqbrac{y_2+(y_1-y_2)\ab{\elbows{\phi _2,\psi _1}}^2}P_{\phi _2}\\
  &=\sqbrac{y_2+(y_1-y_2)\ab{\elbows{\phi _1,\psi _1}}^2}P_{\phi _1}
  +\sqbrac{y_1-(y_2-y_1)\ab{\elbows{\phi _1,\psi _1}}^2}P_{\phi _2}
\end{align*}
Moreover, applying Theorem~\ref{thm43}(ii) gives
\begin{align*}
(A\mid B)^\wedge&=x_1\elbows{\phi _1,\bhat\phi _1}P_{\phi _1}+x_2\elbows{\phi _2,\bhat\phi _2}P_{\phi _2}\\
  &=\sqbrac{x_1y_1\ab{\elbows{\phi _1,\psi _1}}^2+x_1y_2\ab{\elbows{\phi _1\psi _2}}^2}P_{\phi _1}\\
  &\qquad +\sqbrac{x_2y_1\ab{\elbows{\phi _2,\psi _1}}^2+x_2y_2\ab{\elbows{\phi _2,\psi _2}}^2}P_{\phi _2}\\
  &=x_1\sqbrac{y_2+(y_1-y_2)\ab{\elbows{\phi _1,\psi _1}}^2}P_{\phi _1}\\
   &\qquad +x_2\sqbrac{y_2+(y_1-y_2)\ab{\elbows{\phi _2,\psi _1}}^2}P_{\phi _2}\\
   &=x_1\sqbrac{y_2+(y_1-y_2)\ab{\elbows{\phi _1,\psi _1}}^2}P_{\phi _1}\\
   &\qquad +x_2\sqbrac{y_1+(y_2-y_1)\ab{\elbows{\phi _1,\psi _1}}^2}P_{\phi _2}\hskip 6pc\qed
\end{align*}
\end{exam}

\section{An Observable Complement}  
Let $A=\brac{a_x\colon x\in\Omega _A}$ be an observable. We call $A$ an $n$-\textit{observable} if $\ab{\Omega _A}=n$ and $a_x\ne 0$ for all $x\in\Omega _A$. We define the $n$-observable
\begin{equation*}
I_A=\brac{\tfrac{1}{n}\,I_x\colon x\in\Omega _A}
\end{equation*}
where $I_x=I$ for all $x\in\Omega _A$. It is easy to check that $(I_A\mid B)=I_A$ and $(B\mid I_A)=B$ for every
$B\in\oscript (H)$. If $\lambda\in\sqbrac{0,1}$ we call $\lambda I_A+(1-\lambda )A$ the observable $A$ with
\textit{noise content} $\lambda$ \cite{hz12}. We define the \textit{complement} of an $n$-observable $A$ by
\begin{equation*}
A'=\brac{\tfrac{1}{n-1}\,a'_x\colon x\in\Omega _A}
\end{equation*}
The reader can easily verify that $A'$ is indeed an observable.

\begin{lem}    
\label{lem51}
For an $n$-observable $A$ we have that $A'=A$ if and only if $A=I_A$.
\end{lem}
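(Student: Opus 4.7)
The plan is to reduce the equality $A' = A$ to a single pointwise equation in each effect $a_x$ and solve that equation directly. Both observables have value-space $\Omega_A$, so $A' = A$ exactly means that the effect at each $x \in \Omega_A$ agrees, i.e.
\begin{equation*}
\tfrac{1}{n-1}(I - a_x) = a_x \quad \text{for all } x \in \Omega_A.
\end{equation*}

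For the nontrivial direction, I would clear denominators and collect terms: the equation above is equivalent to $I - a_x = (n-1)a_x$, i.e. $n a_x = I$, so $a_x = \tfrac{1}{n} I$ for every $x$. This is precisely the statement $A = I_A$. Here one uses $n \geq 1$ (and in fact $n \geq 2$, since $n = 1$ would force $a_x = I$, but then $A' = \{\tfrac{1}{0}(I-I)\}$ is undefined; however for $n = 1$ the only observable is $A = \{I\} = I_A$ trivially, so the statement still holds and we can handle $n = 1$ as a degenerate case separately or note that the definition of $A'$ implicitly assumes $n \geq 2$).

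For the converse, suppose $A = I_A$, so $a_x = \tfrac{1}{n} I$ for each $x \in \Omega_A$. Then $a'_x = I - \tfrac{1}{n} I = \tfrac{n-1}{n} I$, and thus
\begin{equation*}
\tfrac{1}{n-1} a'_x = \tfrac{1}{n-1} \cdot \tfrac{n-1}{n} I = \tfrac{1}{n} I = a_x,
\end{equation*}
which gives $A' = A$ entry by entry.

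There is no real obstacle here; the whole content of the lemma is the one-line linear algebra that $\tfrac{1}{n-1}(I-a) = a$ forces $a = \tfrac{1}{n}I$. The only subtlety worth flagging is the case $n=1$, which should be handled as a remark since the factor $\tfrac{1}{n-1}$ in the definition of $A'$ is not defined there.
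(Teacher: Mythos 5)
Your proof is correct and follows essentially the same route as the paper: pointwise equality $a_x=\tfrac{1}{n-1}(I-a_x)$ solved to give $a_x=\tfrac{1}{n}I$ for necessity, and a direct computation of $I_A'$ for sufficiency. The remark about the degenerate case $n=1$ is a reasonable extra observation but not needed, since the definition of $A'$ implicitly assumes $n\ge 2$.
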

\begin{proof}
For sufficiency we have that
\begin{align*}
I'_A&=\brac{\tfrac{1}{n-1}\,\paren{\tfrac{1}{n}\,I_x}'\colon x\in\Omega _A}
  =\brac{\tfrac{1}{n-1}\,\paren{I-\tfrac{1}{n}\,I_x}\colon x\in\Omega _A}\\
  &=\brac{\tfrac{1}{n-1}\,\paren{1-\tfrac{1}{n}}I_x\colon x\in\Omega _A}=\brac{\tfrac{1}{n}\,I_x\colon x\in\Omega _A}=I_A
\end{align*}
For necessity, if $A'=A$, we obtain for all $x\in\Omega _A$ that
\begin{equation*}
a_x=\tfrac{1}{n-1}\,a'_x=\tfrac{1}{n-1}\,(I-a_x)=\tfrac{1}{n-1}\,I-\tfrac{1}{n-1}\,a_x
\end{equation*}
This implies that $a_x=\tfrac{1}{n}\,I$. Hence, $A=I_A$.
\end{proof}

The next result shows that complementation preserves conditioning and mixtures.

\begin{thm}    
\label{thm52}
{\rm{(i)}}\enspace $(B\mid A)'=(B'\mid A)$ for all $A,B\in\oscript (H)$.
{\rm{(ii)}}\enspace If $\lambda _i\in\sqbrac{0,1}$, $i=1,2,\ldots ,m$, $\sum\lambda _i=1$ and $A_i\in\oscript (H)$, $i=1,2,\ldots ,m$, with the same value-spaces $\Omega$, then
\begin{equation*}
\paren{\sum\lambda _iA_i}'=\sum\lambda _iA'_i
\end{equation*}
\end{thm}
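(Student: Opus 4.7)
The plan is to reduce both parts to the identity $I = \sum_x a_x$ together with the linearity of the sequential product in its second argument, which was noted in Section 2.

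For part (i), let $A = \{a_x : x\in\Omega_A\}$ and $B = \{b_y : y\in\Omega_B\}$, and assume $B$ is an $n$-observable so that the complement is defined. I would first compute $(B\mid A)'_y$ directly from the definitions. Since $I = \sum_x a_x = \sum_x a_x\circ I$, the key manipulation is
\begin{equation*}
I - (B\mid A)_y = \sum_x a_x\circ I - \sum_x a_x\circ b_y = \sum_x a_x\circ(I - b_y) = \sum_x a_x\circ b'_y,
\end{equation*}
where the middle equality uses that $c\mapsto a_x\circ c$ is additive. Dividing by $n-1$ gives $(B\mid A)'_y = \frac{1}{n-1}\sum_x a_x\circ b'_y$. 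On the other side, $(B')_y = \frac{1}{n-1} b'_y$, and pulling the scalar $\frac{1}{n-1}$ out of the sequential product yields $(B'\mid A)_y = \frac{1}{n-1}\sum_x a_x\circ b'_y$ as well. The two expressions agree for each $y\in\Omega_B$, proving (i).

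For part (ii), write $A_i = \{(a_i)_x : x\in\Omega\}$ and assume the mixture is an $n$-observable. The calculation is almost identical to (i): using $\sum_i \lambda_i = 1$ to rewrite $I$ as $\sum_i\lambda_i I$,
\begin{equation*}
I - \sum_i\lambda_i(a_i)_x = \sum_i\lambda_i\bigl(I - (a_i)_x\bigr) = \sum_i\lambda_i(a_i)'_x,
\end{equation*}
so dividing by $n-1$ gives $\bigl(\sum_i\lambda_i A_i\bigr)'_x = \sum_i\lambda_i\cdot\frac{1}{n-1}(a_i)'_x = \sum_i\lambda_i (A'_i)_x$, which is the $x$-component of $\sum_i\lambda_i A'_i$.

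Neither step is a real obstacle; the only thing to watch is that the complement operation requires all effects to be nonzero, so one must implicitly assume that $(B\mid A)$ in (i) and the mixture in (ii) are genuine $n$-observables (otherwise the statements are vacuous). Apart from that bookkeeping, both parts are two-line consequences of additivity of the sequential product and the fact that $\sum\lambda_i = 1$.
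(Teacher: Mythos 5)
Your proof is correct and follows essentially the same route as the paper: both parts come down to the additivity of $c\mapsto a\circ c$, the identities $a_x\circ I=a_x$ and $\sum_x a_x=I$ (respectively $\sum_i\lambda_i=1$), and pulling the scalar $\tfrac{1}{n-1}$ through the sequential product; the paper merely organizes (i) by first expanding $A\circ B'$ and then marginalizing, whereas you verify the two sides of the identity directly. Your bookkeeping remark about the complement requiring nonzero effects is a fair observation that the paper leaves implicit.
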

\begin{proof}
(i)\enspace The sequential product $A\circ B'$ becomes
\begin{align*}
A\circ B'&=\brac{a_x\circ\paren{\tfrac{1}{n-1}\,b'_y}\colon (x,y)\in\Omega _A\times\Omega _B}\\
  &=\brac{\tfrac{1}{n-1}\,a_x\circ b'_y\colon (x,y)\in\Omega _A\times\Omega _B}\\
  &=\brac{\tfrac{1}{n-1}\,a_x\circ (I-b_y)\colon (x,y)\in\Omega _A\times\Omega _B}\\
  &=\brac{\tfrac{1}{n-1}\,(a_x-a_x\circ b_y)\colon (x,y)\in\Omega _A\times\Omega _B}
\end{align*}
Hence,
\begin{align*}
(B'\mid A)&=\brac{\tfrac{1}{n-1}\,\sum _x(a_x-a_x\circ b_y)\colon y\in\Omega _B}\\
  &=\brac{\tfrac{1}{n-1}\paren{I-\sum _xa_x\circ b_y}\colon y\in\Omega _B}\\
  &=\brac{\tfrac{1}{n-1}\paren{\sum _xa_x\circ b_y}'\colon y\in\Omega _B}=(B\mid A)'
\end{align*}
(ii)\enspace For $A_i=\brac{a_{ix}\colon x\in\Omega}$ we have that
\begin{align*}
\paren{\sum\lambda _iA_i}'&=\brac{\tfrac{1}{n-1}\paren{\sum\lambda _ia_{ix}}'\colon x\in\Omega}
  =\brac{\tfrac{1}{n-1}\paren{I-\sum\lambda _ia_{ix}}\colon x\in\Omega}\\
  &=\brac{\tfrac{1}{n-1}\paren{\sum\lambda _iI-\sum\lambda _ia_{ix}}\colon x\in\Omega}\\
  &=\brac{\tfrac{1}{n-1}\,\sum\lambda _i(I-a_{ix})\colon x\in\Omega}
    =\brac{\tfrac{1}{n-1}\,\sum\lambda _ia'_{ix}\colon x\in\Omega}\\
  &=\brac{\sum\lambda _i\,\tfrac{1}{n-1}\,a'_{ix}\colon x\in\Omega}=\sum\lambda _iA'_i\qedhere
\end{align*}
\end{proof}
We say that a stochastic matrix $\nu$ is \textit{bistochastic} if $\sum _x\nu _{xy}=1$ for all $y$. Although complementation need not preserve post-processing we have the following result.

\begin{lem}    
\label{lem53}
$(\nu\tbullet A)'=\nu\tbullet A'$ if and only if $\nu$ is bistochastic.
\end{lem}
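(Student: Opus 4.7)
The strategy is to expand each $y$-effect of $(\nu\tbullet A)'$ and of $\nu\tbullet A'$ directly from the definitions in Sections~3 and~5, and then compare.

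Let $n=\ab{\Omega_A}$. Under the tacit hypothesis that both sides are well-defined as complements, and since bistochasticity of $\nu$ forces $\ab{\Omega_A}=\ab{\Omega_B}$ by summing all entries two ways, we may also write $\ab{\Omega_B}=n$. For each $y\in\Omega_B$, a one-line computation gives
\begin{align*}
(\nu\tbullet A)'_y &= \tfrac{1}{n-1}\paren{I - \sum_x \nu_{xy}\,a_x},\\
(\nu\tbullet A')_y &= \sum_x \nu_{xy}\cdot\tfrac{1}{n-1}(I - a_x)
  = \tfrac{1}{n-1}\paren{\paren{\sum_x\nu_{xy}}I - \sum_x\nu_{xy}\,a_x}.
\end{align*}
Subtracting the first expression from the second collapses all the $a_x$-terms and leaves
\[
(\nu\tbullet A')_y - (\nu\tbullet A)'_y \;=\; \tfrac{1}{n-1}\paren{\sum_x\nu_{xy} - 1}I.
\]

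For the ``if'' direction, $\nu$ bistochastic means $\sum_x\nu_{xy}=1$ for every $y\in\Omega_B$, so the displayed difference vanishes pointwise and $(\nu\tbullet A)' = \nu\tbullet A'$. For the ``only if'' direction, equality of the two observables forces the displayed difference to be zero for every $y$; since $I\ne 0$, this is equivalent to $\sum_x\nu_{xy}=1$ for every $y$, i.e.\ $\nu$ is bistochastic.

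I do not expect a significant obstacle: the argument reduces to a pointwise expansion and a one-line subtraction. The only delicate point to flag at the outset is that bistochasticity implicitly requires $\ab{\Omega_A}=\ab{\Omega_B}$, which is what allows both complements to share the same normalizing constant $\tfrac{1}{n-1}$ and is what makes the coefficients of $\sum_x\nu_{xy}\,a_x$ cancel cleanly.
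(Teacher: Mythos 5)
Your proposal is correct and follows essentially the same route as the paper: expand $(\nu\tbullet A)'_y$ and $(\nu\tbullet A')_y$ from the definitions and observe that they differ by $\tfrac{1}{n-1}\paren{\sum_x\nu_{xy}-1}I$, which vanishes for all $y$ exactly when $\nu$ is bistochastic. Your remark that bistochasticity forces $\ab{\Omega_A}=\ab{\Omega_B}$, so that both complements carry the same normalizing factor $\tfrac{1}{n-1}$, is a point the paper leaves implicit, and flagging it is a small but genuine improvement.
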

\begin{proof}
We have that
\begin{equation*}
(\nu\tbullet A)'=\brac{\tfrac{1}{n-1}\paren{\sum _x\nu _{xy}a_x}'\colon y\in\Omega _B}=
  \brac{\tfrac{1}{n-1}\paren{I-\sum _x\nu _{xy}a_x}\colon y\in\Omega _B}
\end{equation*}
Moreover,
\begin{align*}
\nu\tbullet A'&=\brac{\tfrac{1}{n-1}\,\sum _x\nu _{xy}a'_x\colon y\in\Omega _B}
  =\brac{\tfrac{1}{n-1}\,\sum _x\nu _{xy}(I-a_x)\colon y\in\Omega _B}\\
  &=\brac{\tfrac{1}{n-1}\paren{\sum _x\nu _{xy}I-\sum _x\nu _{xy}a_x}\colon y\in\Omega _B}
\end{align*}
These two expressions agree if and only if $\sum _x\nu _{xy}=1$ for all $y$.
\end{proof}

It is of interest to iterate the complementation operation to obtain\newline $A',A'',A''',A^{iv},A^v,\ldots\,$.

\begin{thm}    
\label{thm54}
Let $A$ be an $n$-observable. If $m$ is even, then
\begin{equation}                
\label{eq51}
A^m=\sqbrac{1-\tfrac{1}{(n-1)^m}}I_A+\tfrac{1}{(n-1)^m}\,A
\end{equation}
and if $m$ is odd, then
\begin{equation}                
\label{eq52}
A^m=\sqbrac{1-\tfrac{1}{(n-1)^{m-1}}}I_A+\tfrac{1}{(n-1)^{m-1}}\,A'
\end{equation}
\end{thm}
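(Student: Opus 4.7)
The approach is to iterate the single-step identity
$\paren{\lambda I_A + (1-\lambda)D}' = \lambda I_A + (1-\lambda)D'$,
valid for any observable $D$ with value-space $\Omega_A$ and any $\lambda\in\sqbrac{0,1}$. This identity is immediate from Lemma~\ref{lem51} (which gives $I_A' = I_A$) together with Theorem~\ref{thm52}(ii) (the linearity of complementation on mixtures with a common value-space). Its content is the geometric fact that complementation fixes the $I_A$-component of a mixture and complements only the other component, so \eqref{eq51} and \eqref{eq52} will feed into each other as $m$ increases.

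The base cases $m=0$ and $m=1$ are immediate verifications: \eqref{eq51} at $m=0$ reads $A^0 = 0\cdot I_A + 1\cdot A = A$, and \eqref{eq52} at $m=1$ reads $A^1 = 0\cdot I_A + 1\cdot A' = A'$. For the even-to-odd step, assume \eqref{eq51} at level $m$ and apply the single-step identity with $D = A$: this yields $A^{m+1} = \sqbrac{1 - 1/(n-1)^m} I_A + \tfrac{1}{(n-1)^m} A'$, which is precisely \eqref{eq52} at level $m+1$, noting $(m+1)-1 = m$.

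The odd-to-even step requires the direct calculation of $A''$. From $(A')_x = \tfrac{1}{n-1}(I - a_x)$, a second application of the definition of complementation gives $(A'')_x = \tfrac{n-2}{(n-1)^2} I + \tfrac{1}{(n-1)^2} a_x$, equivalently $A'' = \tfrac{n(n-2)}{(n-1)^2} I_A + \tfrac{1}{(n-1)^2} A$. Assuming \eqref{eq52} at level $m$ and applying the single-step identity with $D = A'$ gives $A^{m+1} = \sqbrac{1 - 1/(n-1)^{m-1}} I_A + \tfrac{1}{(n-1)^{m-1}} A''$; substituting the formula for $A''$ and using the algebraic identity $(n-1)^2 - n(n-2) = 1$ to collect the coefficient of $I_A$ delivers $A^{m+1} = \sqbrac{1 - 1/(n-1)^{m+1}} I_A + \tfrac{1}{(n-1)^{m+1}} A$, which is \eqref{eq51} at level $m+1$. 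There is no real obstacle: conceptually, complementation is an affine contraction of the segment between $A$ (or $A'$) and the fixed point $I_A$, and once Lemma~\ref{lem51} and Theorem~\ref{thm52}(ii) are in hand the only algebraic input is the one-line identity $(n-1)^2 - n(n-2) = 1$; the only bookkeeping nuisance is tracking the two parities separately.
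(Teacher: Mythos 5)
Your proposal is correct and follows essentially the same route as the paper: induction on $m$ split by parity, with the even-to-odd step handled by Lemma~\ref{lem51} together with Theorem~\ref{thm52}(ii), and the odd-to-even step requiring the explicit computation of $A''$ (the paper's equation~\eqref{eq53}) and the identity $(n-1)^2-n(n-2)=1$. Your packaging of the inductive step as the single identity $\paren{\lambda I_A+(1-\lambda)D}'=\lambda I_A+(1-\lambda)D'$ is a clean restatement of what the paper does implicitly, not a different argument.
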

\begin{proof}
The statement clearly holds for $m=1$. To show it holds for $m=2$ we have that
\begin{align} 
\label{eq53}
A''&=\brac{\tfrac{1}{n-1}\paren{\tfrac{1}{n-1}\,a'_x}'\colon x\in\Omega _A}
  =\brac{\tfrac{1}{n-1}\paren{I-\tfrac{1}{n-1}\,a'_x}\colon x\in\Omega _A}\notag\\
  &=\brac{\tfrac{1}{n-1}\sqbrac{I-\tfrac{1}{n-1}(I-a_x)}\colon x\in\Omega _A}\notag\\
  &=\brac{\sqbrac{\tfrac{1}{n-1}-\tfrac{1}{(n-1)^2}}I+\tfrac{1}{(n-1)^2}\,a_x\colon x\in\Omega _A}\notag\\
  &=\brac{\tfrac{(n-2)n}{(n-1)^2}\,\tfrac{1}{n}\,I_x+\tfrac{1}{(n-1)^2}\,a_x\colon x\in\Omega _A}\notag\\
  &=\sqbrac{1-\tfrac{1}{(n-1)^2}}I_A+\tfrac{1}{(n-1)^2}\,A
\end{align}
Proceeding by induction, suppose the result holds for the integer $m$. If $m$ is even, then \eqref{eq51} holds. Applying Lemma~\ref{lem51} and Theorem~\ref{thm52}(ii) we conclude that
\begin{equation*}
A^{m+1}=\sqbrac{1-\tfrac{1}{(n-1)^m}}I_A+\tfrac{1}{(n-1)^m}\,A'
\end{equation*}
which is \eqref{eq52} with $m$ replaced by $m+1$. Hence, the result holds for $m+1$. If $m$ is odd, then \eqref{eq52} holds. Again, by Lemma~\ref{lem51} and Theorem~\ref{thm52}(ii) we obtain
\begin{equation*}
A^{m+1}=\sqbrac{1-\tfrac{1}{(n-1)^{m-1}}}I_A+\tfrac{1}{(n-1)^{m-1}}\,A''
\end{equation*}
Applying \eqref{eq53} we conclude
\begin{align*}
A^{m+1}&=\sqbrac{1-\tfrac{1}{(n-1)^{m+1}}}I_A+\tfrac{1}{(n-1)^{m+1}}\sqbrac{1-\tfrac{1}{(n-1)^2}}I_A
  +\tfrac{1}{(n-1)^{m+1}}\,A\\
  &=\sqbrac{1-\tfrac{1}{(n-1)^{m+1}}}I_A+\tfrac{1}{(n-1)^{m+1}}\,A
\end{align*}
which is \eqref{eq51} with $m$ replaced by $m+1$. Hence, the result again holds for $m+1$. It follows by induction that the result holds for all $m\in\Natural$.
\end{proof}

We conclude from Theorem~\ref{thm54} that if $m$ is even, then $A^m$ is the observable $A$ with noise content
$\sqbrac{1-\tfrac{1}{(n-1)^m}}$ and if $m$ is odd, then $A^m$ is the observable $A'$ with noise content
$\sqbrac{1-\tfrac{1}{(n-1)^{m-1}}}$. The dichotomic $(n=2)$ case is an exception and we then have that $A^m=A$ when $m$  is even and $A^m=A'$ when $m$ is odd. Notice that $A'$ is a special case of a post-processing of $A$. In fact, $A'=\nu\tbullet A$ where for all $x,y\in\Omega _A$ we have that
\begin{equation*}
\nu _{xy}=\begin{cases}\tfrac{1}{n-1}&\hbox{if }x\ne y\\\noalign{\smallskip}0&\hbox{if }x=y\end{cases}
\end{equation*}

\section{A Different Viewpoint}  
We now consider conditioning from another viewpoint. Besides observables, measurable quantities are frequently represented by self-adjoint operators. For $T\in\lscript _S(H)$, the corresponding \textit{spectral observable} is given by the unique sharp observable $P=\brac{P_x}$ where $T=\sum xP_x$, $P_x\in\pscript (H)$, $x\in\real$. In this case, the $x$ are the distinct eigenvalues of $T$. Notice that $P$ is a real-valued observable and $\phat =\sum xP_x=T$ so our concepts are consistent. Let $S\in\lscript _S(H)$, with $S=\sum yQ_y$, $Q_y\in\pscript (H)$, $y\in\real$ so $Q=\brac{Q_y}$ is the spectral observable for $S$. Letting $\Omega _T$, $\Omega _S$ be the sets of eigenvalues for $T$ and $S$, respectively, we have that
\begin{equation*}
Q\circ P=\brac{Q_y\circ P_x\colon x\in\Omega _T,y\in\Omega _S}
\end{equation*}
and $(P\mid Q)_x=\sum _{y\in\Omega _S}Q_y\circ P_x$. We then define the operator
$(T\mid S)\in\lscript (H)$ by$(T\mid S)=(P\mid Q)^\wedge$ and call $(T\mid S)$ the operator $T$
\textit{conditioned on the operator} $S$. We then have that
\begin{align}                
\label{eq61}
(T\mid S)&=\sum _xx(P\mid Q)_x=\sum _xx\sum _yQ_y\circ P_x=\sum _xx\sum _yQ_yP_xQ_y\notag\\
  &=\sum _yQ_y\paren{\sum _xxP_x}Q_y=\sum _yQ_yTQ_y
\end{align}
It is interesting to note that $(T\mid S)$ depends on $T$ and $Q_y$, $y\in\Omega _S$, but not on the particular values of $y$.

\begin{lem}    
\label{lem61}
We have that $(T\mid S)=T$ if and only if $ST=TS$.
\end{lem}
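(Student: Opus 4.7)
The plan is to work from the explicit formula $(T\mid S)=\sum_y Q_y T Q_y$ obtained in~\eqref{eq61}, together with the facts that $\{Q_y\colon y\in\Omega_S\}$ is a set of mutually orthogonal projections with $\sum_y Q_y=I$ (a sharp spectral observable). The equation $(T\mid S)=T$ then becomes the operator identity $\sum_y Q_yTQ_y=T$, and both implications should reduce to manipulating this identity against the orthogonality relations of the $Q_y$.

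For the forward implication, assume $ST=TS$. Each spectral projection $Q_y$ is a polynomial in $S$ (standard functional calculus for a self-adjoint operator on a finite-dimensional space), so $Q_y$ commutes with $T$ for every $y\in\Omega_S$. Using $Q_y^2=Q_y$, I would then compute
\begin{equation*}
(T\mid S)=\sum_y Q_yTQ_y=\sum_y TQ_y^2=T\sum_y Q_y=T.
\end{equation*}

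For the reverse implication, assume $T=\sum_y Q_yTQ_y$. The key move is to sandwich this equation by $Q_{y_0}$ on the left and $Q_{y_1}$ on the right for $y_0\ne y_1$. Orthogonality $Q_{y_0}Q_y=\delta_{y_0,y}Q_{y_0}$ and $Q_yQ_{y_1}=\delta_{y,y_1}Q_{y_1}$ collapse the sum to zero, yielding $Q_{y_0}TQ_{y_1}=0$ whenever $y_0\ne y_1$. Using $\sum_{y_0}Q_{y_0}=I$, this gives
\begin{equation*}
TQ_{y_1}=\sum_{y_0}Q_{y_0}TQ_{y_1}=Q_{y_1}TQ_{y_1}=\sum_{y_0}Q_{y_1}TQ_{y_0}=Q_{y_1}T
\end{equation*}
for every $y_1\in\Omega_S$. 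Therefore $T$ commutes with each $Q_y$, and hence with $S=\sum_y yQ_y$.

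There is no real obstacle here: the forward direction is routine once one appeals to the polynomial representation of the spectral projections, and the reverse direction is the standard trick that orthogonality of the $Q_y$ lets one extract the off-diagonal blocks $Q_{y_0}TQ_{y_1}$ one at a time. The only point worth emphasizing in the write-up is that $\sum_y Q_y=I$, so that $T$ itself admits the block decomposition $T=\sum_{y_0,y_1}Q_{y_0}TQ_{y_1}$, against which the hypothesis $(T\mid S)=T$ can be compared block by block.
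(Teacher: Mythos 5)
Your proof is correct and follows essentially the same route as the paper: the forward direction uses that the spectral projections $Q_y$ commute with $T$ when $S$ does, and the reverse direction extracts $Q_{y}TQ_{y'}$ blocks from the identity $T=\sum_y Q_yTQ_y$ via orthogonality of the $Q_y$, exactly as in the paper's argument (which multiplies that identity by $Q_y$ on the left and right). Your intermediate step isolating the vanishing off-diagonal blocks is just a slightly more explicit version of the same computation.
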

\begin{proof}
If $ST=TS$ then it is well-known that $Q_yT=TQ_y$ for all $y\in\Omega _S$. Applying \eqref{eq61} gives
\begin{equation*}
(T\mid S)=\sum _yQ_yT=T
\end{equation*}
Conversely, suppose that $(T\mid S)=T$. Applying \eqref{eq61} again, we obtain $T=\sum _yQ_yTQ_y$. It follows that
\begin{equation*}
Q_yT=Q_yTQ_y=TQ_y
\end{equation*}
for all $y\in\Omega _S$ so that $ST=TS$.
\end{proof}

Notice that $T\mapsto (T\mid S)$ is a real linear function.

\begin{thm}    
\label{thm62}
{\rm{(i)}}\enspace If $T\ge 0$, then $(T\mid S)\ge 0$.
{\rm{(ii)}}\enspace $\rmtr\sqbrac{(T\mid S)}=\rmtr (T)$.
{\rm{(iii)}}\enspace If $\rho\in\sscript (H)$, then $(\rho\mid S)\in\sscript (H)$ and
\begin{equation*}
\rmtr\sqbrac{\rho (T\mid S)}=\rmtr\sqbrac{(\rho\mid S)T}
\end{equation*}
\end{thm}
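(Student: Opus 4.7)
The plan is to exploit the explicit formula \eqref{eq61}, namely $(T\mid S)=\sum_y Q_yTQ_y$ where $\brac{Q_y\colon y\in\Omega_S}$ is the spectral resolution of $S$. Each of the three statements reduces to a short computation once this representation is in hand, so the work is less about ideas and more about bookkeeping with the trace and the identity $\sum_y Q_y=I$.

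For (i), I would fix an arbitrary $\phi\in H$ and write
\begin{equation*}
\elbows{\phi,(T\mid S)\phi}=\sum_y\elbows{\phi,Q_yTQ_y\phi}=\sum_y\elbows{Q_y\phi,TQ_y\phi},
\end{equation*}
using that each $Q_y$ is self-adjoint (being a projection). Since $T\ge 0$, every summand is nonnegative, so $(T\mid S)\ge 0$. For (ii), I would combine \eqref{eq61} with cyclicity of the trace and $Q_y^2=Q_y$:
\begin{equation*}
\rmtr\sqbrac{(T\mid S)}=\sum_y\rmtr(Q_yTQ_y)=\sum_y\rmtr(Q_y^2T)=\rmtr\paren{\sum_yQ_y\,T}=\rmtr(T).
\end{equation*}

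For (iii), note first that $(\rho\mid S)=\sum_y Q_y\rho Q_y$ is a self-adjoint positive operator by (i) applied with $T=\rho$, and has unit trace by (ii), so $(\rho\mid S)\in\sscript(H)$. For the duality identity I would again expand using \eqref{eq61} and shuffle a $Q_y$ through the trace:
\begin{align*}
\rmtr\sqbrac{\rho(T\mid S)}&=\sum_y\rmtr(\rho Q_yTQ_y)=\sum_y\rmtr(Q_y\rho Q_yT)\\
  &=\rmtr\sqbrac{\paren{\sum_yQ_y\rho Q_y}T}=\rmtr\sqbrac{(\rho\mid S)T}.
\end{align*}

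There is no real obstacle here; the only thing to be careful about is that (iii) uses (i) and (ii) in the order stated (to see $(\rho\mid S)$ is indeed a state before invoking it as such), and that in (ii) and (iii) one must apply cyclicity of the trace twice so that the projections recombine into $\sum_y Q_y=I$ or into $(\rho\mid S)$.
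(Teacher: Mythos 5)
Your proposal is correct and follows essentially the same route as the paper: all three parts are derived from the representation $(T\mid S)=\sum_yQ_yTQ_y$ in \eqref{eq61}, with (i) via $\elbows{Q_y\phi,TQ_y\phi}\ge 0$, (ii) via cyclicity of the trace and $\sum_yQ_y=I$, and (iii) by combining (i) and (ii) with one more trace-cyclicity computation. No differences worth noting.
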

\begin{proof}
(i)\enspace Assume that $T\ge 0$ and $\phi\in H$. Applying \eqref{eq61} gives 
\begin{align*}
\elbows{\phi ,(T\mid S)\phi}&=\elbows{\phi ,\sum _yQ_yTQ_y\phi}=\sum _y\elbows{\phi ,Q_yTQ_y\phi}\\
  &=\sum _y\elbows{Q_y\phi ,TQ_y\phi}\ge 0
\end{align*}
Hence, $(T\mid S)\ge 0$.
(ii)\enspace Again, applying \eqref{eq61} gives
\begin{align*}
\rmtr\sqbrac{(T\mid S)}&=\rmtr\paren{\sum _yQ_yTQ_y}=\sum _y\rmtr (Q_yTQ_y)\\
  &=\sum _y\rmtr (Q_yT)=\rmtr\paren{\sum Q_yT}=\rmtr (T)
\end{align*}
(iii)\enspace If $\rho\in\sscript (H)$, it follows from (i) and (ii) that $(\rho\mid S)\in\sscript (H)$. Moreover, it follows from \eqref{eq61} that 
\begin{align*}
\rmtr\sqbrac{\rho (T\mid S)}&=\rmtr\sqbrac{\rho\sum _yQ_yTQ_y}=\rmtr\sqbrac{\sum _yQ_y\rho Q_yT}\\
  &=\rmtr\sqbrac{(\rho\mid S)T}\qedhere
\end{align*}
\end{proof}

When $Q$ is atomic with $Q=\brac{P_{\psi _y}}$, then \eqref{eq61} becomes
\begin{equation*}
(T\mid S)=\sum _y\elbows{\psi _y,T\psi _y}P_{\psi _y}=\sum _{x,y}x\elbows{\psi _y,P_x\psi _y}P_{\psi _y}
\end{equation*}
and when $P$ is atomic with $P=\brac{P_{\phi _x}}$, then \eqref{eq61} becomes
\begin{equation*}
(T\mid S)=\sum _{x,y}x\elbows{\phi _x,Q_y\phi _x}P_{(Q_y\phi _x)^\wedge}
\end{equation*}
When $P$ and $Q$ are both atomic as above, then \eqref{eq61} gives
\begin{equation*}
(T\mid S)=\sum _{x,y}x\ab{\elbows{\psi _y,\phi _x}}^2P_{\psi _y}
\end{equation*}
where $\ab{\elbows{\psi _y,\phi _x}}^2$ is the transition probability from $\phi _x$ to $\psi _y$.

Although this technique is related to our previous method, it is not equivalent because the observables are sharp.

\end{document}